\definecolor{darkgreen}{rgb}{0,0.4,0}
\definecolor{BrickRed}{rgb}{0.65,0.08,0}
\newcommand{\exzk}[2]{[z^{#1}]_{#2}}
\newcommand{\exzkn}{\exzk{n}{\zeta_k}}
\newcommand{\coefb}{C_7}
\newcommand{\coefd}{C'_7}
\newcommand{\walksym}{\omega}
\newcommand{\walk}[1]{\walksym_{#1}}
\newcommand{\stepset}{\mathcal{S}}
\newcommand{\LandauO}{\mathcal{O}}
\newcommand{\Dc}{\mathcal{D}}
\newcommand{\C}{\mathbb{C}}
\newcommand{\N}{\mathbb{N}}
\newcommand{\Z}{{\mathbb Z}}
\begin{document}
\title{\Large Lattice paths of slope $2/5$}
\author{Cyril Banderier\thanks{Laboratoire d'Informatique de Paris Nord, UMR CNRS 7030, Universit\'e Paris Nord, 93430 Villetaneuse, France} \\
\and 
Michael Wallner\thanks{Institute of Discrete Mathematics and Geometry, TU Wien, Wiedner Hauptstr. 8-10/104, A-1040 Wien, Austria}
}
\date{\it{This article corresponds, up to minor typo corrections, to the extended abstract which appeared in  pp.~105--113  of the 
2015 Proceedings of the Twelfth Workshop on Analytic Algorithmics and Combinatorics (ANALCO),
a conference which held in San Diego in January 2015.  A long version of this work ``Lattice paths below a line of (ir)rational slope'' will appear soon.}
}

\maketitle

\begin{abstract} \small\baselineskip=10pt 
We analyze some enumerative and asymptotic properties of Dyck paths under a line of slope 2/5.
This answers to Knuth's problem \#4 from his ``Flajolet lecture'' during the conference ``Analysis of Algorithms'' (AofA'2014) in Paris in June 2014.
Our approach relies on the work of Banderier and Flajolet for asymptotics and enumeration of directed lattice paths. 

A key ingredient in the proof is the generalization of an old trick of Knuth himself 
(for enumerating permutations sortable by a stack),
promoted by Flajolet and others as the ``kernel method''.
All the corresponding generating functions are algebraic,
and they offer some new combinatorial identities, 
which can be also tackled in the {\em A=B} spirit of Wilf--Zeilberger--Petkov{\v s}ek.

We show how to obtain similar results for other slopes than 2/5, 
an interesting case being e.g.\ Dyck paths below the slope 2/3, 
which corresponds to the so called Duchon's club model.
\end{abstract}

\section{Introduction}
\label{sec:intro}

What Flajolet named the ``kernel method''
has been part of the folklore of combinatorialists for some time.
Earlier references usually deal with the case of 
a functional equation (with apparently more unknowns than equations!) of the form
\[
K(z,u)F(z,u)= A(z,u)+B(z,u) G(z)
\]
where $A, B$, and $K$ are given and where $F, G$ are the unknown functions we want to get.
$K(z,u)$ is a polynomial in $u$ which we call the ``kernel'' as
we ``test'' this functional equation on functions $u(z)$ canceling this kernel.
The simplest case is when there is only one branch, $u_1$, such that $K(z,u_1(z))=0$ and $u_1(0)=0$;
in that case, a single substitution gives a closed form solution for $G$:
namely, $G(z)=-A(z,u_1)/B(z,u_1)$.
One clear source of this is the detailed solution to Exercise 2.2.1--4 
in {\em the Art of Computer Programming}
(\cite[pp.~536--537]{Kn69} and also Ex.~2.2.1.11 therein),
which presents a ``new method for solving the ballot problem'', 
for which the kernel equation is quadratic (it then involves just one small branch $u_1$).

In combinatorics, there are many applications of variants of this way of solving functional equations:
e.g.\ it is known as the ``quadratic method'' in map enumeration, as initially done by Tutte and Brown, see also~\cite{bfss01},
or as the ``iterated kernel method'' in queuing theory (as done in~\cite{Fayolle} for non directed walks in the quarter plane), 
and it also plays a key r\^ole for other constraint lattice paths and their asymptotics~\cite{BaFl02}, for additive parameters~\cite{BaGi06}, 
for generating trees~\cite{hexa}, for avoiding-pattern permutations~\cite{Mansour}, for statistics in posets~\cite{Fusy}...

Let us give a definition of the lattice paths we consider:
 \begin{Definition} \label{def:LP}
 A {\it step set} $\stepset \subset \Z^2$, is a finite set of vectors $\{ (x_1,y_1), \ldots, (x_m,y_m)\}$. 
An $n$-step \emph{lattice path} or \emph{walk} is a sequence of vectors $v = (v_1,\ldots,v_n)$, such that $v_j$ is in $\stepset$. 
Geometrically, it may be interpreted as a sequence of points $\walksym =(\walk{0},\walk{1},\ldots,\walk{n})$ where $\walk{i} \in \Z^2, \walk{0} = (0,0)$ (or another starting point)
and $\walk{i}-\walk{i-1} = v_i$ for $i=1,\ldots,n$.
The elements of $\stepset$ are called \emph{steps} or \emph{jumps}. 
The \emph{length} $|\walksym|$ of a lattice path is its number $n$ of jumps. 
 \end{Definition}
The lattice paths can have different additional constraints shown in Table \ref{fig-4types}.
  \begin{table*}[t]
 \small
 \begin{center}\renewcommand{\tabcolsep}{3pt}
 \begin{tabular}{|c|c|c|}
 \hline
 & ending anywhere & ending at 0\\
 \hline
 \begin{tabular}{c} unconstrained \\ (on~$\Z$) \end{tabular}
 & \begin{tabular}{c} 

 {\includegraphics[width=5cm]{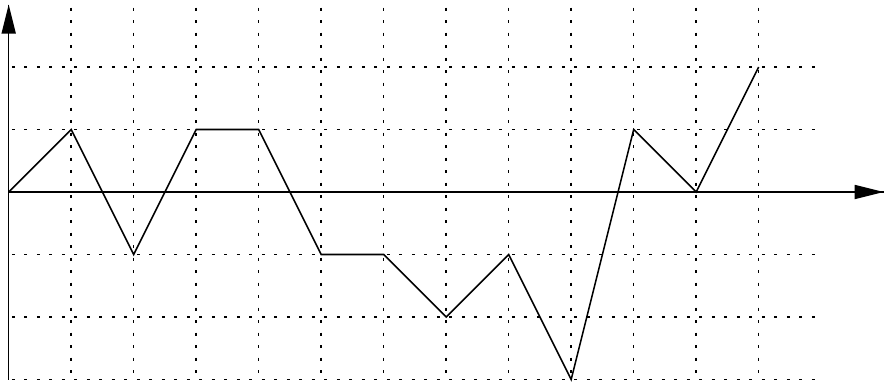}} 
\\ 
 walk/path ($\cal W$) 
 \end{tabular}
 & \begin{tabular}{c} 

 {\includegraphics[width=5cm]{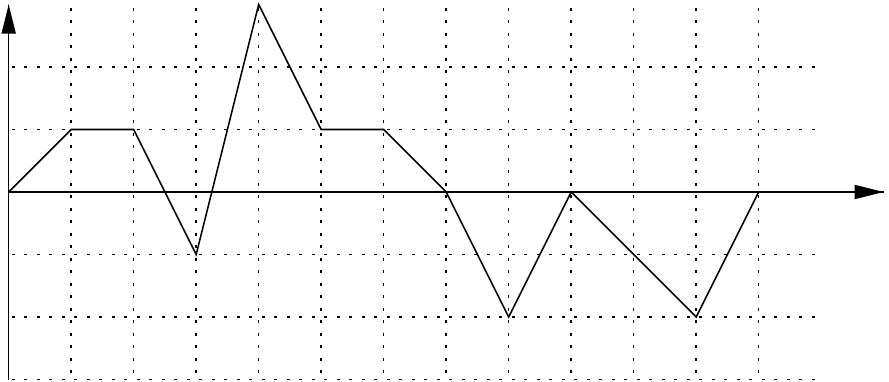}} 
\\
 bridge ($\cal B$)
 \end{tabular} \\
 \hline
 \begin{tabular}{c}constrained\\ (on $\N$) \end{tabular}
 & \begin{tabular}{c} 
 \includegraphics[width=5cm]{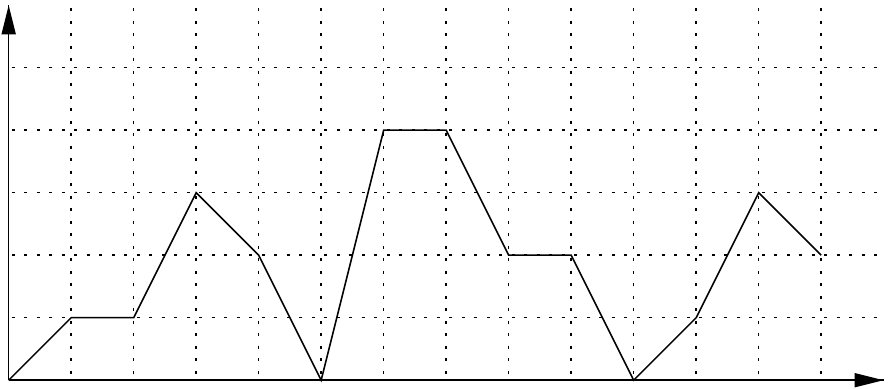} 
\\ 
 meander ($\cal M$)\\ 
 \end{tabular}
 & \begin{tabular}{c} 
 {\includegraphics[width=5cm]{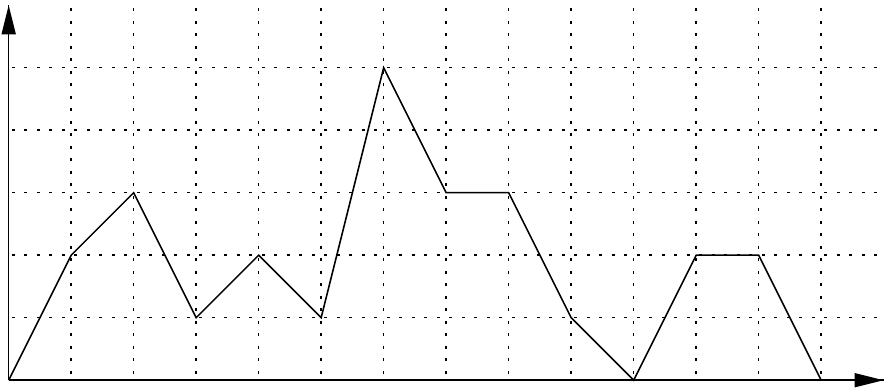}} 
\\ 
 excursion ($\cal E$)\\ 
 \end{tabular}\\
 \hline
 \end{tabular}
 \end{center}
 \caption{\label{fig-4types} 
 The four types of paths: walks, bridges, meanders, and excursions.
We refer to these walks as the Banderier--Flajolet model,  in contrast with the model in which we will consider a rational slope boundary.}
 \end{table*}
 
We restrict our attention to \emph{directed paths} which are defined by the fact that, for each jump $(x,y) \in \stepset$, one must have $x \geq  0$.  
The next definition allows to merge the probabilistic point of view (random walks) and the combinatorial point of view (lattice paths):
 \begin{Definition}
 For a given step set $\stepset = \{s_1,\ldots,s_m\}$, we define the respective {\it system of weights}
as $ \{w_1,\ldots,w_m\}$ where $w_j >0$ is the weight associated\- to step $s_j$ for $j=1,\ldots,m$. 
The {\it weight of a path} is defined as the product of the weights of its individual steps. 
 \end{Definition}

\smallskip
 This article mainly builds on the work done in~\cite{BaFl02}. Therein, the class of directed lattice paths 
in $\Z^2$ was investigated thoroughly by means of analytic combinatorics (see \cite{flaj09}). 
First, in Section 2, we give a bijection between lattice paths below a line of rational slope,
and lattice paths from the Banderier--Flajolet model.
In Section 3, we give Knuth's open problem on lattice paths with slope 2/5.
In Section 4, the needed bivariate generating function is defined and the governing functional equation is derived and solved:
here the ``kernel method'' plays the most significant r\^ole in order to obtain the generating function 
(as typical for many combinatorial objects which are recursively defined with a ``catalytic parameter'').
In Section 5, we tackle some questions on asymptotics, thus answering the question of Knuth.
In Section 6, we analyze what happens for the Duchon's club (lattice paths with slope 2/3), and other slopes.
In Section 7, we conclude with some open questions of computer algebra.

\section{A bijection for lattice paths below a rational slope}
\label{sec:bij}

Consider paths in the $\N^2$ lattice\footnote{We live in a world where $0\in\N$.} starting in the origin whose allowed steps are of the type either East or North (i.e.,~steps $(1,0)$ and $(0,1)$, respectively). Let $\alpha, \beta$ be positive rational numbers.
We restrict the walks to stay strictly below the barrier $L: y = \alpha x + \beta$. Hence, the allowed domain for our walks forms kind of an obtuse cone with the $x$-axis, the $y$-axis and the barrier $L$ as boundaries. The problem of counting walks in such a domain is equivalent to counting directed walks in the Banderier--Flajolet model, as seen via the following bijection:

\begin{proposition}{\rm [Bijection: Lattice paths below a rational slope are directed lattice paths.]}
	\label{prop:bijgen}
	Let $\Dc: y < \alpha x + \beta$ be the domain strictly below the barrier $L$. 
From now on, we assume without loss of generality that $\alpha=a/c$ and $\beta=b/c$ where $a, b, c$ are positive integers such that $\gcd(a,b,c)=1$  (thus, it may be the case that $a/c$ or $b/c$ are reducible fractions).
There exists a bijection between ``walks starting from the origin with North and East steps''
and ``directed walks starting from $(0,b)$ with the step set $\{(1,a), (1,-c)\}$''. What is more, the restriction of staying below the barrier $L$ is mapped to the restriction of staying above the $x$-axis. 	
\end{proposition}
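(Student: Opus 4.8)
The plan is to exhibit an explicit step-by-step encoding that reads off a North/East path and rewrites each of its steps, while tracking a single integer statistic measuring the (rescaled) vertical gap to the barrier. First I would introduce the altitude function $h(x,y) := ax + b - cy$, which is $c$ times the vertical distance from the point $(x,y)$ to the line $L$. Since $a,b,c$ are integers, the strict inequality $y < \alpha x + \beta$ defining $\Dc$ is equivalent to $h(x,y) \ge 1$, and at the starting point we have $h(0,0) = b$. The crucial observation is how $h$ reacts to the two admissible steps: an East step $(1,0)$ sends $h \mapsto h + a$, while a North step $(0,1)$ sends $h \mapsto h - c$.

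This suggests the map $\Phi$: keep the sequence of steps of a North/East path $\walksym$ unchanged as a word, but reinterpret each letter, replacing every East step by the jump $(1,a)$ and every North step by the jump $(1,-c)$, and placing the result at the starting point $(0,b)$. I would then prove by induction on the number $i$ of steps that, writing $(x_i,y_i)$ for the position of $\walksym$ after $i$ steps, the image walk sits at position $(i,\, h(x_i,y_i))$ after $i$ steps: the first coordinate is clear since every new jump has horizontal component $1$, and the second follows from the two increment rules above together with $h(0,0)=b$. Consequently $\Phi(\walksym)$ is a directed walk from $(0,b)$ with step set $\{(1,a),(1,-c)\}$ whose ordinate after $i$ steps is exactly the altitude $h(x_i,y_i)$. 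The constraint then transfers for free: $\walksym$ stays strictly below $L$ if and only if $h(x_i,y_i) \ge 1$ for all $i$, i.e.\ if and only if $\Phi(\walksym)$ never touches or crosses the $x$-axis.

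For the reverse direction I would define $\Psi$ by the inverse letter substitution, sending each jump $(1,a)$ back to an East step and each jump $(1,-c)$ back to a North step; this is well defined because $a \neq -c$ makes the two jump types distinguishable, so reading a directed walk with these steps yields a unique word in $\{E,N\}$, hence a unique North/East path from the origin (which automatically lives in $\N^2$, as it only moves right and up). Because $\Phi$ and $\Psi$ act as mutually inverse relabelings on the underlying step words, and because the altitude identity shows they match the two boundary conditions in both directions, they are mutually inverse bijections.

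The construction carries no real analytic difficulty; the delicate points are purely the bookkeeping ones. The main thing to get right is the strict-versus-weak boundary: ``strictly below $L$'' must become ``staying at ordinate $\ge 1$'', which is what is meant here by remaining above the $x$-axis, and one must invoke integrality, $h \ge 1 \Leftrightarrow cy \le ax + b - 1$, so that no path lying on the line $h = 0$ leaks through. I would also remark that the normalization $\gcd(a,b,c)=1$ is used only to fix a canonical integer representative of the pair $(\alpha,\beta)$; the encoding itself is valid for any positive integers $a,b,c$, and different representatives merely produce different but equivalent Banderier--Flajolet models for the same family of paths.
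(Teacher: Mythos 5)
Your proposal is correct and is essentially the paper's own proof in disguise: since each North or East step increases $x+y$ by exactly $1$, your step-by-step encoding sending $\walksym$ to the walk whose position after $i$ steps is $(i,\,ax_i-cy_i+b)$ is precisely the affine transformation $(x,y)\mapsto(x+y,\,ax-cy+b)$ that the paper uses, and your integrality argument for the boundary ($h\ge 1$ versus $h>0$) matches the paper's observation that $y<\alpha x+\beta$ becomes $ax-cy+b>0$.
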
	
\begin{proof}
The following affine transformation gives the  bijection (see Figure \ref{fig:bij}):\vspace{-3mm}
	\begin{align*}
		 \begin{pmatrix} x \\y  \end{pmatrix} \mapsto
			\begin{pmatrix}
				x + y \\
				a x - c y + b
			\end{pmatrix}.
	\end{align*}
Indeed, the determinant of the involved linear mapping is $-(c+a) \neq 0$. 
What is more, the constraint of being below the barrier (i.e., one has $y<\alpha x+\beta$)
is thus forcing the new abscissa to be positive: $ax-cy+b>0$. The  gcd conditions ensures an optimal choice (i.e., the thinnest lattice) for the lattice on which walks will live. 
Note that this affine transformation gives a bijection not only in the case of initial step set North and East, but for any set of jumps.
\end{proof}

The purpose of this  bijection is to map walks of length $n$ 
to meanders (i.e.,~walks that stay above the $x$-axis) constructed out of $n$ unit steps into the positive $x$ direction. 

\begin{figure*}[t]
	\centering
	\subfloat[Rational slope model]{
		\includegraphics[width=0.5\textwidth]{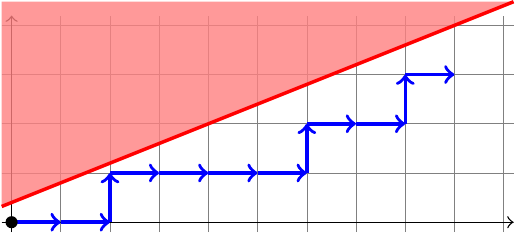}
	}
	\quad
	\subfloat[Banderier--Flajolet model]{
		\includegraphics[width=0.40\textwidth,height=41mm]{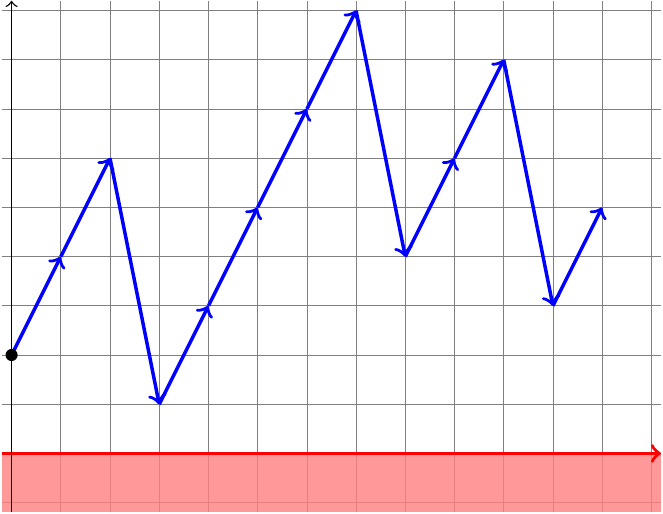}
	}
	\caption{Example showing the bijection from Proposition \ref{prop:bijgen}: Dyck paths below the line $y=2/5 x +2/5$ (or touching it)
	are in bijection with walks with jumps $+2$ and $-5$, starting at altitude 2, and staying above the line $y=0$ (or touching it).}
	\label{fig:bij}
\end{figure*}

Note that if one does not want the walk to touch the line $y=\alpha x + b/c$,
it corresponds to a model in which one allows to touch, but with a border at $y= \alpha x + (b-1)/c$.
Time reversal is also giving a bijection between walks starting at altitude $b$ with jumps $+a, -c$ and ending at 0
with walks starting at 0 and ending at altitude $b$ with jumps $-a,+c$.

\section{Knuth's AofA problem \#4}
During the conference ``Analysis of Algorithms'' (AofA'2014) in Paris in June 2014,
Knuth gave the first invited talk, dedicated to the memory of Philippe Flajolet (1948-2011).
The title of his lecture was ``Problems that Philippe would have loved'' and was pinpointing/developing five nice open problems with a good flavor of ``analytic combinatorics''
(his slides are available online\footnote{\tt http://www-cs-faculty.stanford.edu/$\sim$uno/flaj2014.pdf}).
The fourth problem was on ``Lattice paths of slope 2/5'', in which Knuth investigated Dyck paths under a line of slope 2/5, following the work of~\cite{Nakamigawa12}. 
This is best summarized by the two following slides of Knuth:

\begin{figure}[th]
	\begin{center}
		\fbox{\includegraphics[width=0.45\textwidth]{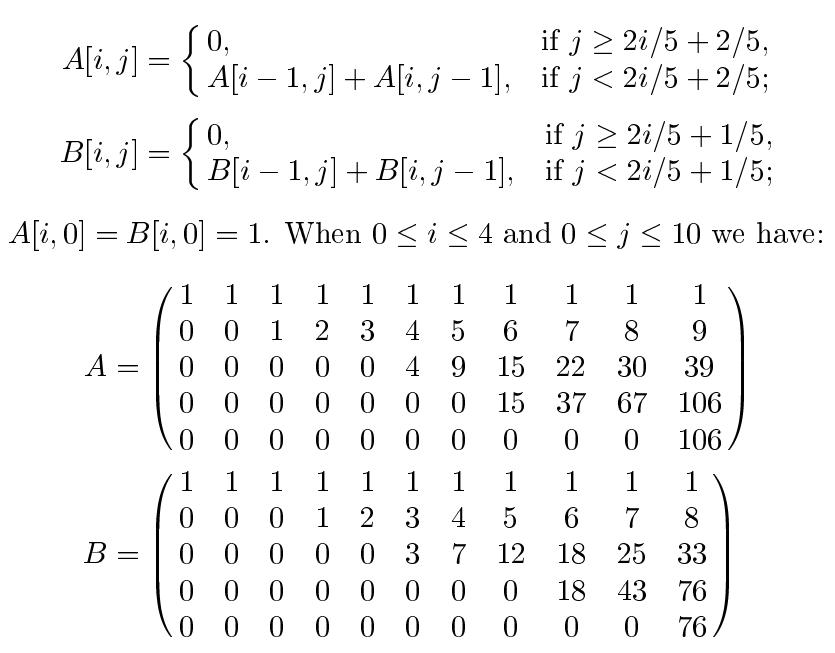}}
		\end{center}
\end{figure}
\begin{figure}[th]
	\begin{center}
		\fbox{\includegraphics[width=0.45\textwidth]{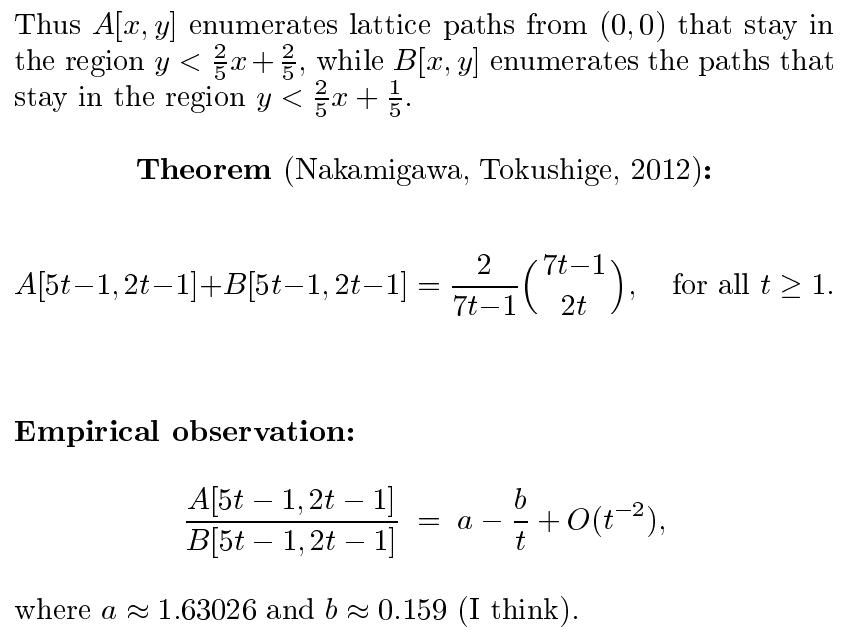}}
		\end{center}
\end{figure}

In the next sections we prove that Knuth was indeed right!  In order not to conflict with our notations, 
let us rename Knuth's constants $a$ and $b$ into $\kappa_1$ and $\kappa_2$.

\section{\large{Functional equation and closed forms for lattice paths of slope 2/5}}
\label{sec:func}

In this section, we show how to derive closed forms (i.e., explicit expressions) for the generating functions of lattice paths of slope 2/5 (and their coefficients).
First, define the jump polynomial $P(u) := u^{-2} + u^5$. Note that the bijection in Proposition \ref{prop:bijgen} gives jump sizes $2$ and $-5$. However, a time reversal gives this equivalent model, which is less technical to deal with (see below). 
Let $f_{n,k}$ be the number of walks of length $n$ which end at altitude $k$. The corresponding bivariate generating function is given by
\begin{align*}
	F(z,u) &= \sum_{n,k \geq 0} f_{n,k} z^n u^k = \sum_{n \geq 0} f_n(u) z^n = \sum_{k \geq 0} F_k(z) u^k,
\end{align*}
where the $f_n(u)$ encode all walks of length $n$ and the $F_k(z)$ are the generating functions for walks ending at altitude $k$. A step-by-step approach yields the following linear recurrence
\begin{align*}
	f_{n+1}(u) &= \{u^{\geq 0}\} \left[ P(u) f_n(u) \right] \qquad \text{ for } n \geq 0,
\end{align*}
with initial value $f_0(u)$ (i.e.,~the polynomial representing the walks of length $0$), and where $\{u^{\geq 0}\}$ is a linear operator extracting all the monomials in $u$ 
of nonnegative exponent. Summing the $z^{n+1} f_{n+1}(u)$ leads to  the functional equation
\begin{align}
	\label{eq:funceq}
	(1 - z P(u)) F(z,u) = f_0(u) - z u^{-2} F_0(z) - z u^{-1} F_1(z).
\end{align}
We apply the \emph{kernel method} in order to transform this equation into a system of linear equations for $F_0$ and $F_1$. The factor $K(z,u):=1-zP(u)$ is called the \emph{kernel} and the kernel equation is given by $K(z,u)=0$. Solving this equation for $u$, we obtain $7$ distinct solutions. These split into two groups, namely, we get $2$ small roots $u_1(z)$ and $u_2(z)$ (the ones going to 0 for $z\sim0$) and $5$ large roots which we call $v_i(z)$ for $i=1,\ldots,5$ (the ones going to infinity for $z\sim0$). It is legitimate to insert the $2$ small branches into~\eqref{eq:funceq} to obtain 
\begin{align*}
	z F_0 + z u_1 F_1 &= u_1^2 f_0(u_1),\\
	z F_0 + z u_2 F_1 &= u_2^2 f_0(u_2).
\end{align*}
This linear system is easily solved via Kramer's formula, which yields
\begin{align*}
	F_0(z) &= - \frac{u_1 u_2 \left(u_1 f_0(u_1) - u_2 f_0(u_2) \right)}{z (u_1 - u_2)}\,, \\
	F_1(z) &= \frac{u_1^2 f_0(u_1) - u_2^2 f_0(u_2) }{z (u_1 - u_2)}\,.
\end{align*}
Now, let the functions $F(z,u)$ and $F_k(z)$ denote functions associated with $f_0(u) = u^3$ (i.e.,~there is one walk of length $0$ at altitude $3$) and let the functions $G(z,u)$ and  $G_k(z)$ denote functions associated with $f_0(u) = u^4$. 
One thus gets the following theorem:
\begin{theorem}{\rm [Closed form for the generating functions]}
Let us consider walks in $\N^2$ with jumps $-2$ and $+5$.
The number of such walks starting at altitude 3 and ending at altitude 0 is given by $F_0(z)$,
 the number of such walks starting at altitude 4 and ending at altitude~1 is given by $G_1(z)$, and we have the following closed forms 
in the terms of the small roots $u_1(z)$ and $u_2(z)$ of $1-zP(u)=0$ with $P(u)=u^{-2}+u^5$:
\begin{align}
	\label{eq:F0G1}
	F_0(z) &=  - \frac{u_1 u_2 \left(u_1^4 - u_2^4\right)}{z (u_1 - u_2)}, \\
	G_1(z) &= \frac{u_1^6 - u_2^6 }{z (u_1 - u_2)}\,.
\end{align} 
\end{theorem}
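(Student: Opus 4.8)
The plan is to recognize that the heavy lifting is already done: the kernel method applied to the functional equation~\eqref{eq:funceq} has produced closed forms for $F_0$ and $F_1$ that are valid for \emph{any} initial polynomial $f_0(u)$. The theorem is just the specialization of these formulas to the two starting configurations $f_0(u)=u^3$ and $f_0(u)=u^4$. So the proof naturally splits into a combinatorial identification of the relevant initial conditions and a routine algebraic substitution into the already-derived expressions.

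First I would fix the combinatorial dictionary. The series $f_0(u)$ records the length-$0$ walks by their altitude, so choosing $f_0(u)=u^3$ places the unique empty walk at altitude $3$, which means $F(z,u)$ enumerates walks that \emph{start} at altitude $3$. Since $F_0(z)=[u^0]F(z,u)$ collects precisely the walks that \emph{end} at altitude $0$, the object counted by $F_0$ under $f_0(u)=u^3$ is exactly ``walks with jumps $-2,+5$ from altitude $3$ to altitude $0$''. Identically, taking $f_0(u)=u^4$ makes the companion series $G(z,u)$ enumerate walks starting at altitude $4$, and $G_1(z)=[u^1]G(z,u)$ then counts walks from altitude $4$ to altitude $1$. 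This establishes the first two sentences of the theorem, and it is tied back to the slope-$2/5$ picture via Proposition~\ref{prop:bijgen} together with the time-reversal remark, which trades the bijective jumps $+2,-5$ for the more convenient $-2,+5$.

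Next I would carry out the substitution. Plugging $f_0(u)=u^3$ into the general expression
\[
F_0(z) = -\frac{u_1 u_2\bigl(u_1 f_0(u_1) - u_2 f_0(u_2)\bigr)}{z(u_1-u_2)}
\]
and using $f_0(u_i)=u_i^3$, so that $u_i f_0(u_i)=u_i^4$, immediately yields~\eqref{eq:F0G1}. Similarly, plugging $f_0(u)=u^4$ into the formula for $F_1$ (renamed $G_1$ for this initial condition), the term $u_i^2 f_0(u_i)$ becomes $u_i^6$, which produces the second closed form. Both steps are direct: no cancellation or simplification beyond reading off the monomials is needed.

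The genuinely hard part lies not in the theorem itself but in the derivation preceding it, namely the justification that the kernel method is legitimate. One must verify that the kernel equation $1-zP(u)=0$, after clearing denominators to the degree-$7$ polynomial $u^2-z(1+u^7)=0$, has exactly two \emph{small} branches $u_1(z),u_2(z)$ that are fractional power series vanishing at $z=0$ (behaving like $\pm z^{1/2}$ to leading order), so that the substitutions $u\mapsto u_i(z)$ turn $F(z,u)$ into well-defined power series and force the right-hand side of~\eqref{eq:funceq} to vanish. Establishing the split into $2$ small and $5$ large roots dictated by the shape of $P(u)=u^{-2}+u^5$, and the nondegeneracy $u_1\neq u_2$ that makes the resulting $2\times 2$ system solvable, is the real obstacle; it is handled by the Newton--Puiseux / Banderier--Flajolet analysis underlying the whole construction, which I would invoke rather than reprove.
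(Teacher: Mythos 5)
Your proposal is correct and follows essentially the same route as the paper: the paper derives the general kernel-method formulas for $F_0$ and $F_1$ in terms of an arbitrary $f_0(u)$ (via the two small roots and Cramer's rule) and then obtains the theorem by the same specializations $f_0(u)=u^3$ and $f_0(u)=u^4$ that you perform, with the same combinatorial reading of the starting and ending altitudes. Your added remarks on the legitimacy of substituting the two small branches and on the $2$-versus-$5$ root split match what the paper delegates to the Banderier--Flajolet framework, so there is nothing to correct.
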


Thanks to the bijection given in Section~2 between walks in the rational slope model and directed lattice paths in the Banderier--Flajolet model
(and by additionally reversing the time\footnote{Reversing the time allows us to express all the generating functions in terms of just 2 roots. If one does not  reserve time, 
everything is working well but is involving the 5 large roots, and gives more complicated closed forms.}), it is now possible to relate the quantities $A$ and $B$ of Knuth with $F_0$ and $G_1$:
\begin{align}
	\label{eq:ABdef}
	A_n&:= A[5n-1,2n-1] = [z^{7n-2}] G_1(z), \\
	B_n&:= B[5n-1,2n-1] = [z^{7n-2}] F_0(z).
\end{align}
Indeed, from the bijection of Prop.~2.1, the walks strictly below $y =(a/c) x + b/c$  
(with $a=2$, $c=5$) 
and ending at $(x,y) = (5n-1,2n-1)$
are mapped (in the Banderier--Flajolet model, not allowing to touch $y=0$) to walks 
starting at $(0,b)$ and ending at $(x+y,ax-cy+b)=(7n-2,3+b)$.
Reversing the time and allowing to touch $y=0$ (thus $b$ becomes $b-1$),
gives that $A_n$ counts walks starting at 4, ending at 1 (yeah, this is counted by $G_1$!) and that $B_n$ counts walks starting at~3, ending at 0 (yeah, this is counted by $F_0$!).
While there is no nice formula for $A_n$ or $B_n$ (see however~\cite{BanderierDrmota} for a formula involving nested sums of binomials),
it is striking that there is a simple, nice, formula for $A_n+B_n$:
\begin{theorem}{\rm [Closed form for the sum of coefficients]}
\label{theo:closedform}
The sum of the number of Dyck paths (in our rational slope model) touching or staying below $y=(2/5)x+1/5$ and $y=(2/5)x$   and ending on these lines
simplifies to the following expression:
\begin{align}
	\label{eq:AplusBex}
	A_n + B_n &= \frac{2}{7n-1} \binom{7n-1}{2n}.
\end{align}
\end{theorem}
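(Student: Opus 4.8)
The plan is to first collapse the two closed forms into a single power sum of the small roots, and then extract the required coefficient by a Lagrange/residue computation built on the kernel relation.

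First I would add the two expressions of the previous theorem. Putting $F_0(z)+G_1(z)$ over the common denominator $z(u_1-u_2)$, the numerator is
\[
-u_1u_2(u_1^4-u_2^4)+(u_1^6-u_2^6)=u_1^5(u_1-u_2)+u_2^5(u_1-u_2)=(u_1^5+u_2^5)(u_1-u_2),
\]
so the antisymmetric factor $u_1-u_2$ cancels and one is left with the remarkably simple
\[
F_0(z)+G_1(z)=\frac{u_1(z)^5+u_2(z)^5}{z}.
\]
Since $A_n+B_n=[z^{7n-2}](F_0+G_1)$, this reduces the claim to
\[
A_n+B_n=[z^{7n-1}]\bigl(u_1(z)^5+u_2(z)^5\bigr)=[z^{7n-1}]\,S_5(z),
\]
where $S_5(z)$ denotes the degree-$5$ power sum of the two small roots of the kernel.

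The key step is a coefficient formula for power sums of the small roots. Writing the kernel equation $1-zP(u)=0$ as $z=1/P(u)=:\phi(u)=u^2/(1+u^7)$, each small branch inverts $\phi$ near $u=0$; a residue computation (summing the two branches, whose monodromy glues their contours into one loop around $u=0$) yields the standard identity
\[
[z^{N}]\sum_{j}u_j(z)^{r}=\frac{r}{N}\,[u^{-r}]\,P(u)^{N},
\]
the sum ranging over the small roots. Here $P(u)=u^{-2}(1+u^7)$, so $P(u)^{N}=u^{-2N}(1+u^7)^{N}$ and, with $r=5$ and $N=7n-1$,
\[
[u^{-5}]P(u)^{N}=[u^{2N-5}](1+u^7)^{N}=[u^{7(2n-1)}](1+u^7)^{7n-1}=\binom{7n-1}{2n-1}.
\]
Hence $A_n+B_n=\frac{5}{7n-1}\binom{7n-1}{2n-1}$, and the elementary rewriting $\binom{7n-1}{2n-1}=\frac{2}{5}\binom{7n-1}{2n}$ turns this into the announced $\frac{2}{7n-1}\binom{7n-1}{2n}$.

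The main obstacle is the justification of the power-sum formula: one must check that the two small branches, exchanged by the square-root-type monodromy of $z=u^2/(1+u^7)$ at $z=0$, combine so that the two-branch sum of the contour integral in $z$ equals a single residue at $u=0$ in the $u$-plane. Once the integrand factor $u^{r}\phi'(u)/\phi(u)^{N+1}$ is rewritten as $-u^{r}P'(u)P(u)^{N-1}$ and one uses $P'P^{N-1}=\frac1N(P^N)'$, the residue collapses to $\frac{r}{N}[u^{-r}]P(u)^{N}$, so all the analytic subtlety is concentrated in this single branch-gluing argument and the remaining steps are purely formal. As a cross-check that sidesteps the analysis entirely, one may instead extract the coefficient before collapsing via $(P^N)'$, obtaining $A_n+B_n=2\binom{7n-2}{2n-1}-5\binom{7n-2}{2n-2}$, and verify the equivalent finite identity $2\binom{7n-2}{2n-1}-5\binom{7n-2}{2n-2}=\frac{2}{7n-1}\binom{7n-1}{2n}$ by elementary factorial manipulation or automatically in the \emph{A=B} / Wilf--Zeilberger--Petkov{\v s}ek framework mentioned in the introduction.
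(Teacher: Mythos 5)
Your proof is correct, but it finishes by a genuinely different route than the paper's. The reduction $F_0+G_1=(u_1^5+u_2^5)/z$, hence $A_n+B_n=[z^{7n-1}](u_1^5+u_2^5)$, is exactly the paper's starting point (its equation~\eqref{eq:AplusB}). From there the paper proceeds via holonomy theory: a resultant gives the algebraic equation $z^7+(U-1)^5U^2=0$ for $U=u_1^5$, the Abel--Tannery--Cockle--Harley--Comtet theorem converts this into a linear ODE and hence a first-order recurrence for $C_n=A_n+B_n$, which is then matched against the hypergeometric recurrence satisfied by $\frac{2}{7n-1}\binom{7n-1}{2n}$. You instead extract the coefficient directly through the residue/Lagrange-type identity $[z^{N}]\sum_j u_j(z)^{r}=\frac{r}{N}[u^{-r}]P(u)^{N}$, which immediately yields $\frac{5}{7n-1}\binom{7n-1}{2n-1}=\frac{2}{7n-1}\binom{7n-1}{2n}$. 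That identity is correct and is in substance Formula (10) of Banderier--Flajolet, which the paper itself invokes (in the remark $5\,W_{-5}(z)=\Theta(A(z)+B(z))$ just after its proof, noting $[z^N]W_{-5}=[u^{-5}]P(u)^N$) as a ``combinatorial explanation'' without carrying it out; so you have essentially turned the paper's side remark into the proof. The branch-gluing point you flag is real but unproblematic here: the two small branches are conjugate Puiseux branches $u\sim\pm z^{1/2}$ at $z=0$, so their images concatenate into a single positively oriented loop around $u=0$ and the single-residue computation is legitimate. Your route buys a self-contained, computer-algebra-free derivation that \emph{produces} the binomial closed form rather than verifying it; the paper's route is heavier but showcases the generic holonomic machinery that still applies to the combinations $rA_n+sB_n$ admitting no hypergeometric form. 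Your closing cross-check $2\binom{7n-2}{2n-1}-5\binom{7n-2}{2n-2}=\frac{2}{7n-1}\binom{7n-1}{2n}$ is also correct and does give a purely finite verification.
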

\begin{proof}
A first proof of this was given by~\cite{Nakamigawa12} using a variant of the cycle lemma.
We can give another proof, indeed, our Theorem 4.1 implies that
\begin{align}
	\label{eq:AplusB}
	A_n + B_n &= [z^{7n-1}] \left( u_1^5 + u_2^5 \right)\,.
\end{align}
This suggests to use holonomy theory to prove the theorem. Indeed, first, a resultant equation gives the algebraic equation for $U:=u_1^5$ (namely, $z^7+(U-1)^5 U^2=0$)
and then the  Abel--Tannery--Cockle--Harley--Comtet theorem
(see the comment after Proposition 4 in~\cite{BanderierDrmota}) transforms it into a differential equation for 
the series $u_1^5(z^2)$. It is also the differential equation  (up to distinct initial conditions) for $u_2^5(z^2)$  (as $u_2$ is defined by the same equation as $u_1$), and thus of $u_1^5(z^2)+u_2^5(z^2)$.
Therefore, it directly gives the differential equation for the series $C(z)=\sum_n (A_n+B_n) z^n$,
and it corresponds to the following recurrence for its coefficients:
${C_{n+1}=\frac{7}{10}\frac{(7n+5)(7n+4)(7n+3)(7n+2)(7n+1)(7n-1)}{(5n+4)(5n+3)(5n+2)(5n+1)(2n+1)(n+1)}C_n\,,}$
which is exactly  the hypergeometric recurrence for $\frac{2}{7n-1} \binom{7n-1}{2n}$ (with the same initial condition).
This computation takes 1 second on an average computer, 
while, if not done in this way (e.g.,  if instead of the resultant shortcut above, one uses several {\tt{gfun[diffeq*diffeq]}} or variants of it in Maple),
the computations for such a simple binomial formula  surprisingly take hours.
\end{proof}

Some additional investigations conducted by Manuel Kauers (private communication) show that this is the only linear combination of $A_n$ and $B_n$ 
which leads to an hypergeometric solution
(to prove this, you can compute a recurrence for a formal linear combination  $r A_n+ s B_n$, 
and then check which conditions it implies on $r$ and $s$ if one wishes the associated recurrence to be of order 1, i.e., hypergeometric).
It thus appears that $r A_n+ s B_n$ is generically of order 5, with the exception 
of a sporadic $4A_n-B_n$ which is of order 4,
and the miraculous $A_n+B_n$ which is of order 1 (hypergeometric).

However, there are many other hypergeometric expressions floating around: expressions of the type of the right hand side of~\eqref{eq:AplusB} have nice hypergeometric closed forms.
This can also be explained in a combinatorial way, indeed we observe that setting $k=-5$ in Formula~(10) from~\cite{BaFl02},
 leads to $5 W_{-5}(z) = \Theta(A(z)+B(z))$ (where $\Theta$ is the pointing operator). The ``Knuth pointed walks'' are thus in 1-to-5 correspondence with unconstrained walks (see our Table 1, top left) ending at altitude -5.

Now, we need to do some analytic investigations in order to prove what Knuth conjectured:
\begin{align}
	\label{eq:aovberbconj}
	\frac{A_n}{B_n} & =   \kappa_1 - \frac{\kappa_2}{n} + \LandauO(n^{-2})
\end{align}
with $\kappa_1\approx 1.63026$ and $\kappa_2\approx 0.159$.

\section{Asymptotics}

As usual, we need to locate the dominant singularities, and to understand the local behavior there. 
Note that the fact that there are several dominant singularities makes the game harder here,
and this case was only sketched in~\cite{BaFl02}.
Similarly to what happens in the Perron--Frobenius theory, or in~\cite{BanderierDrmota}, 
a periodic behavior of the generating function leads to some more complicated proofs,
because additional details have to be taken into account. With respect to walks,
it is e.g.\ crucial to understand how singularities spread amongst the roots of the kernel.
To this aim,
some quantities will play a key r\^ole: the structural constant $\tau$ is defined as the unique positive root of $P'(\tau)$,
where $P(u)=u^{-2}+u^5$ is encoding the jumps, and the structural radius $\rho$ is given as $\rho = 1/P(\tau)$. For our problem, one thus has the explicit values:
\begin{align*}
	\tau    &= \sqrt[7]{\frac{2}{5}}, & 
	P(\tau) &= \frac{7}{10} \sqrt[7]{2^5 5^2}, &
	\rho    &= \frac{\sqrt[7]{2^2 5^5}}{7}.
\end{align*}

\begin{figure*}[t]
	\centering
\includegraphics[width=0.35\textwidth,height=51mm]{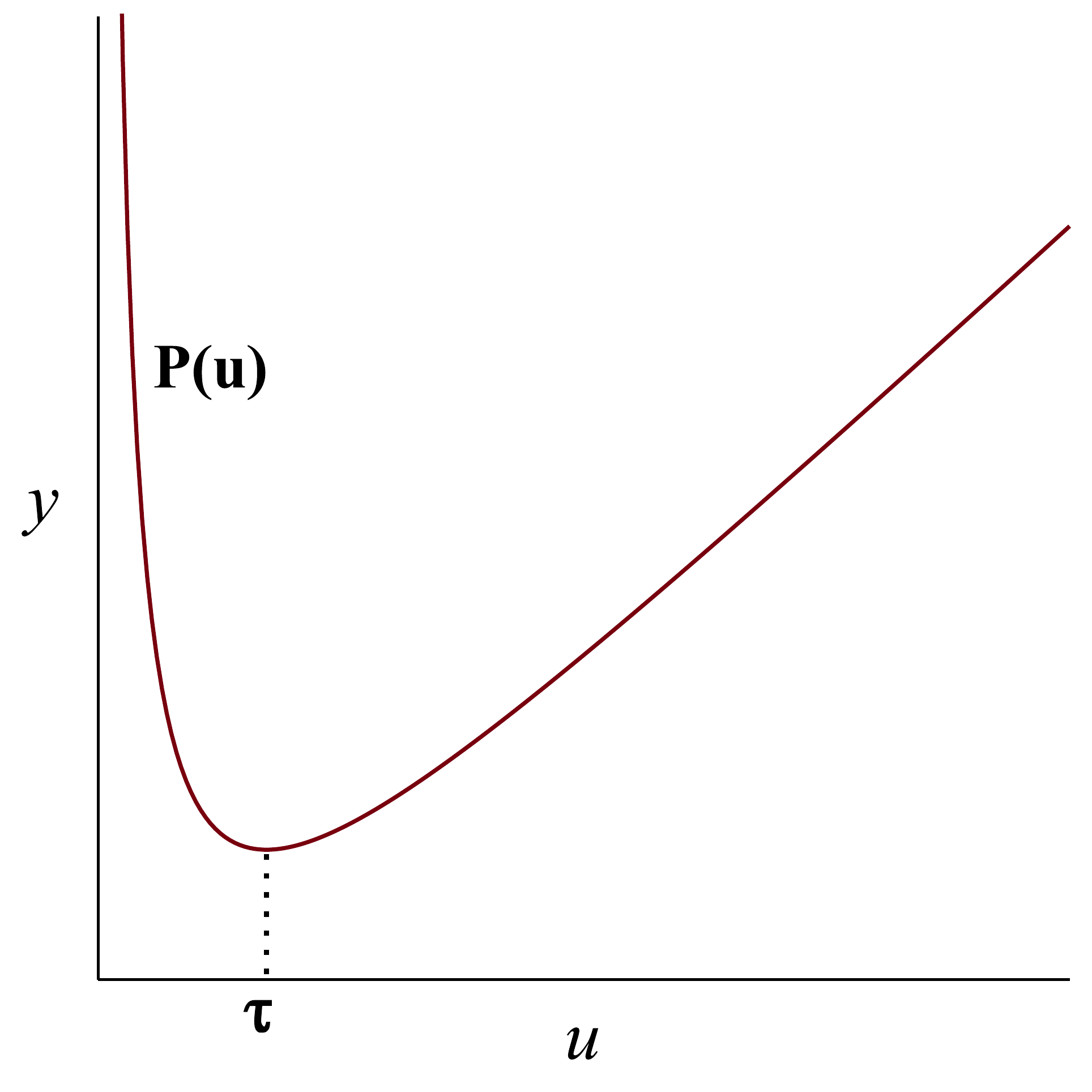} \qquad \includegraphics[width=0.35\textwidth,height=51mm]{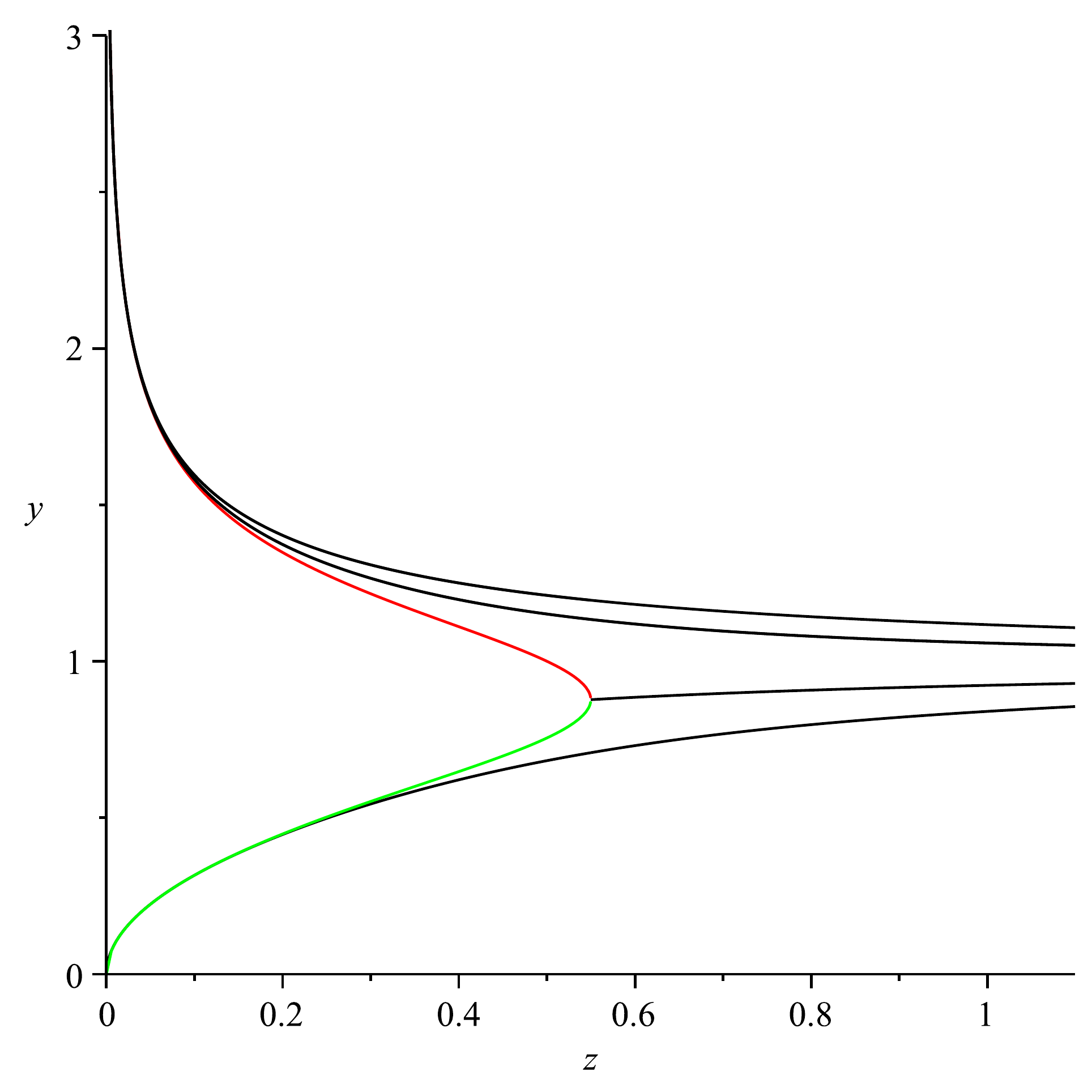}
\caption{$P(u)$ is the polynomial encoding the jumps, its saddle point $\tau$ gives the singularity $\rho=1/P(\tau)$ 
where the small root $u_1$ (in green) meets the large root $v_1$ (in red), with a square root behavior. (In black, we also plotted $|u_2|, 
|v_2|, |v_3|, |v_4|, |v_5|$.)
This is the key for all asymptotics of such lattice paths.
 }
\end{figure*}

From~\cite{BaFl02}, we know that the small branches $u_1$ and $u_2$ are possibly singular only at the roots of $P'(u)$. 
Note that the jump polynomial is periodic with period $p=7$ as $P(u) = u^{-2} H(u^7)$ with $H(u) = 1 + u$. Due to that, there are $7$ possible singularities of the small branches
\begin{align*}
	\zeta_k &= \rho \omega^k, \qquad \text{ with } \omega = e^{2 \pi i / 7}.
\end{align*}

Additionally, we have the following local behaviors:

\begin{lemma}
	\label{lem:u1u2}
	The limits of the small branches when $z \to \zeta_k$ exist and are equal to
	\begin{align*}
		u_1(z) \underset{z\,\sim \,\zeta_k}{=} &
				\begin{cases}
				\tau \omega^{-3k} + C_k \sqrt{1 - z / \zeta_k} +O((1 - z / \zeta_k)^{3/2}), \\
				\text{for } k = 2,5,7,\\     
					\tau_2 \omega^{-3k}   + D_k (1-z / \zeta_k) + O((1 - z / \zeta_k)^2), \\
					\text{for } k= 1,3,4,6,
				\end{cases}\\
		u_2(z) \underset{z\,\sim \,\zeta_k}{=}  &
				\begin{cases}
					\tau_2 \omega^{-3k}  + D_k (1-z / \zeta_k) + O((1 - z / \zeta_k)^2), \\
					\text{for } k = 2,5,7,\\     
					\tau \omega^{-3k} +  C_k \sqrt{1 - z / \zeta_k} +O((1 - z / \zeta_k)^{3/2}), \\
					\text{for } k= 1,3,4,6,
				\end{cases}
	\end{align*}
	where $\tau_2 = u_2(\rho)\approx -.707723271$ is the unique real root of
	$500t^{35}+3900t^{28}+13540t^{21}+27708t^{14}+37500t^7+3125$,
		where $C_k	   = -\frac{\tau}{\sqrt{5}} w^{-3k}$ and $D_k = \tau_2\frac{\tau_2^7+1}{5\tau_2^7-2} \omega^{-3k}$.
\end{lemma}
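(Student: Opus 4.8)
The plan is to analyze the kernel equation $1 - zP(u) = 0$ locally near each candidate singularity $\zeta_k = \rho\omega^k$, determining for each $k$ which of the two small branches $u_1, u_2$ actually becomes singular there (square-root type) versus which stays analytic (and admits a regular expansion). The starting point is the periodicity structure $P(u) = u^{-2}H(u^7)$ with $H(u) = 1+u$. Because of this, the kernel equation is invariant under $u \mapsto \omega u$ up to a rescaling of $z$ by $\omega^{-2}$; I would first make this symmetry precise and use it to transport the local analysis at $z = \rho$ (i.e.\ $\zeta_7$, the real positive singularity) to all the other $\zeta_k$. Concretely, if $u(z)$ solves $1 = zP(u)$, then $\omega^{-3}u(\omega^{-2}z)$ should solve a correspondingly rotated equation, which explains the universal factor $\omega^{-3k}$ appearing in front of both $\tau$ and $\tau_2$ in the statement (the exponent $-3$ being the modular inverse of the step-shift, since $2\cdot(-3)\equiv 1 \bmod 7$).

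\textbf{Local behavior at the real singularity.} The heart of the matter is the point $z = \rho = 1/P(\tau)$, where by the definition of $\tau$ as the root of $P'(\tau)=0$ the two branches coalesce: $\partial_u(zP(u)) = zP'(u)$ vanishes there, forcing a square-root singularity. I would expand $1 - zP(u)$ to second order in $(u-\tau)$ and first order in $(1 - z/\zeta_k)$, obtaining a relation of the form $(1-z/\zeta_k) \sim \text{const}\cdot(u-\tau)^2$, and solve for the leading coefficient $C_k$; matching this against $P''(\tau)$ and $P(\tau)$ should produce the stated $C_k = -\frac{\tau}{\sqrt5}\omega^{-3k}$. The \emph{other} small branch at the same point does not coalesce: it takes the distinct value $\tau_2\omega^{-3k}$, where $\tau_2 = u_2(\rho)$ is an ordinary (non-saddle) root of the kernel at $z=\rho$. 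Its minimal polynomial follows by taking the resultant of $1 - \rho P(u)$ (cleared of denominators) and removing the factor corresponding to the double root $\tau$; this yields the degree-$35$ polynomial quoted, and $\tau_2$ is pinned down as its unique real root. Since $z=\zeta_k$ is \emph{not} a branch point for this second root, the implicit function theorem gives an analytic (linear-order) expansion, and differentiating the kernel equation implicitly at $(\rho,\tau_2)$ gives $D_k = \tau_2\frac{\tau_2^7+1}{5\tau_2^7-2}\omega^{-3k}$ — here $5\tau_2^7-2$ is (up to a factor) $\tau_2^2 P'(\tau_2)\neq 0$, confirming non-degeneracy.

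\textbf{The role-assignment between $k$-classes.} The genuinely delicate step — and the one I expect to be the main obstacle — is determining \emph{which} of $u_1, u_2$ plays the singular (square-root) role at each $\zeta_k$, i.e.\ justifying the split $k\in\{2,5,7\}$ versus $k\in\{1,3,4,6\}$. The labels $u_1, u_2$ are fixed by analytic continuation from $z\sim 0$ (the two branches tending to $0$), and tracking which one arrives at the coalescence value $\tau\omega^{-3k}$ versus the regular value $\tau_2\omega^{-3k}$ as $z$ runs out to $\zeta_k$ requires controlling the global behavior of the roots along rays, not merely their local expansions. I would handle this by a continuity/monodromy argument: the small branches are the two roots of smallest modulus, and one must verify that $|u_1|<|u_2|$ (or the reverse) holds consistently on each sector, so that the branch colliding with a large root $v_i$ at $\zeta_k$ is correctly identified — this is precisely the ``how singularities spread amongst the roots of the kernel'' point flagged in the text. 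A clean way to organize this is to note $\zeta_k$ is a square-root singularity for $u_j$ exactly when $u_j(\zeta_k) = \tau\omega^{-3k}$ lies on the positive-real orbit $\{\tau\omega^m\}$ in the correct sheet; checking the argument $-3k \bmod 7$ against which residues keep the coalescing root as the \emph{smaller}-modulus one should produce exactly the two stated index sets. Once the assignment is settled, substituting back into the two local expansions and reading off $\tau_2$, $C_k$, $D_k$ completes the lemma; the error-term orders ($3/2$ beyond the square root, $2$ beyond the linear term) follow automatically from the next terms in the respective Puiseux and Taylor expansions.
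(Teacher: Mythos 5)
Your first two steps match the paper's route and are sound: the square-root coalescence at $z=\rho$ driven by $P'(\tau)=0$, with $C_7=-\sqrt{2P(\tau)/P''(\tau)}$ (which indeed evaluates to $-\tau/\sqrt{5}$ since $\tau^7=2/5$), the implicit-function-theorem expansion of the non-coalescing branch giving $D_k$, and the resultant computation pinning down the degree-35 minimal polynomial of $\tau_2$. The genuine gap is in your third step, the assignment of the singular role to $u_1$ for $k\in\{2,5,7\}$ and to $u_2$ for $k\in\{1,3,4,6\}$ --- which you rightly flag as the crux but do not actually resolve. Your proposed criteria cannot work as stated: the test ``$u_j(\zeta_k)=\tau\omega^{-3k}$ lies on the orbit $\{\tau\omega^m\}$'' is vacuous (every such value lies on that orbit), and a modulus comparison at $\zeta_k$ is blind to the labelling, since the two candidate limits $\tau\omega^{-3k}$ and $\tau_2\omega^{-3k}$ have moduli $\tau$ and $|\tau_2|$ independent of $k$, while the labels $u_1,u_2$ are fixed by continuation from $z=0$, not by size at $\zeta_k$.

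The mechanism the paper uses, and which is missing from your write-up, is that the rotation symmetry \emph{exchanges} the two small branches. One checks that $\omega^{3}u_i(\omega z)$ is again a small root of the kernel (because $P(\omega^{3}u)=\omega P(u)$), and then, comparing Puiseux expansions at $z=0$ where $u_1\sim z^{1/2}$ and $u_2\sim -z^{1/2}$ so that $\omega^{3}u_i(\omega z)\sim\mp z^{1/2}$ (as $\omega^{7/2}=-1$), concludes it must be the \emph{other} small branch: $u_1(\omega z)=\omega^{-3}u_2(z)$ and $u_2(\omega z)=\omega^{-3}u_1(z)$ on a suitable sector. Iterating from $\zeta_7=\rho$, where $u_1$ is the singular branch, this alternation yields the assignment at $\zeta_1,\zeta_2,\zeta_3$; the remaining indices $k=4,5,6$ are \emph{not} reached by further rotation (the law only holds for $0<\arg z<\pi-2\pi/7$, before the cut) but by the conjugation symmetry $u_i(\bar z)=\overline{u_i(z)}$, which reflects $\zeta_4,\zeta_5,\zeta_6$ onto $\zeta_3,\zeta_2,\zeta_1$ and produces exactly the partition $\{2,5,7\}$ versus $\{1,3,4,6\}$. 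Without the swap and the conjugation step your argument delivers the factor $\omega^{-3k}$ but not the index sets. (A smaller slip: $\omega^{-3}u(\omega^{-2}z)$ does not satisfy the rotated kernel equation; the correct normalization is forced by $P(\omega^{-3}u)=\omega^{-1}P(u)$, i.e.\ the rescaling of $z$ must be by $\omega^{\pm1}$, not $\omega^{-2}$.)
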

\begin{proof}
The Puiseux expansions of the small roots $u_i(z)$   have real coefficients, so one has $u_i(\bar z)=\overline{ u_i(z)}$,
what it is more, one has the following rotation law
	(for all $z \in \C$, with $|z| \leq \rho$ and $0<\arg(z)<\pi-2\pi/7$):
	\begin{align*}
	           u_1(\omega z) &=  \omega^{-3} u_2(z)\,\\ 
		u_2(\omega z) &= \omega^{-3} u_1(z). 
	\end{align*}  
Indeed, let us consider the function $U(z):=w^3 u_i(w z)$ and a mysterious quantity $X$,
defined by $X(z):= U^2-z\phi(U)$ (where $\phi(u):=u^2 P(u)$).
So we have $X(z)= (w^3 u_i(w z))^2-z \phi(w^3 u_i(w z))=  w^6 u_i(w z)^2 -z \phi(u_i(w z))$
(because $\phi$ is 7 periodic) 
and thus $wX(z/w)=  w(w^6 u_i(z)^2 -z/w \phi(u_i(z))) =  u_i(z)^2 -z \phi(u_i(z)))$,
which is 0 because we recognize here the kernel equation.
This implies that $X=U^2-z\phi(U)=0$ and thus $U$ is a root of the kernel.
Which one? It is one of the small roots, because it is converging to 0 at 0.
What is more, this root $U$ is not $u_i$, because it has a different Puiseux expansion 
(and  Puiseux expansion are  unique). So, by the analytic continuation principle (therefore, here, as far as we avoid the cut line 
$\arg(z)=-\pi$), we just proved that $w^3 u_1(w z)= u_2(z)$ and $w^3 u_2(w z)= u_1(z)$
(and this also proves a similar rotation law for large branches, but we do not need it).

Accordingly, at every $\zeta_k$, amongst the two small branches, only one branch becomes singular: this is $u_1$ for $k=2,5,7$ and $u_2$ for $k=1,3,4,6$. 
Hence, we directly see how the asymptotic expansion at the dominant singularities are correlated with the one of $u_1$ at $z=\rho=\zeta_7$,
which we derive following the approach of \cite{BaFl02}; this gives for $z\sim \rho$:
\begin{align*}
	u_{1}(z) &= \tau  + \coefb \sqrt{1 - z / \rho}  + \coefd (1-z/\rho)^{3/2}+\ldots,
\end{align*}	
where $\coefb = -\sqrt{2 \frac{P(\tau)}{P''(\tau)}}$. 
Note that in our case $P^{(3)}(\tau)=0$ (this funny cancellation holds for any $P(u)=p_{5} u^5+p_0+p_{-2} u^{-2}$ ), so even the formula for $\coefd$ is quite simple: $\coefd=-\frac{1}{2} \coefb$.  %A CREUSER! 

In the Lemma, the formula for $\tau_2=u_2(\rho)$ is obtained by a resultant computation.
\end{proof}

For the local analysis of the Knuth periodic generating functions $F_0(z)$ and $G_1(z)$, we introduce a shorthand notation:

\begin{Definition}{\rm [Local asymptotics extractor 	$\exzkn$]}
Let $F(z)$ be a generating function with~$p$ dominant singularities $\zeta_k$  (for $k=1,\ldots, p$). 
Define
\begin{align*}
	\exzkn F(z) &:= [z^n] \text{(Puiseux expansion of $F(z)$ at $z=\zeta_k$)}.
\end{align*}
\end{Definition}
\begin{proposition} 
Let $\rho$ be the positive real dominant singularity in the previous definition
(as $F$ is a generating function, it has real positive coefficients and therefore, by Pringsheim theorem, 
one of the $\zeta_k$'s  has to be real positive). 
When additionally the function $F(z)$ satisfies a rotation law $F(wz)=w^mF(z)$ (where $w=\exp(2i\pi/p)$), then one has a neat simplification (we relabel the $\zeta_k$'s such that  $\zeta_k:=w^k \rho$):
	\begin{align*}
 	 [z^n]F(z)-o(\rho^n) &=\sum_{k=1}^p \exzkn F(z) \\
 	                     &=\sum_{k=1}^p   \exzkn (w^m)^k   F(w^{-k} z) \\
 										 	 &=\sum_{k=1}^p (w^m)^k  (w^{-k})^n  [z^n]_\rho  F(z) \\
    									 &=\left(\sum_{k=1}^p   (w^k)^{m-n}\right)     [z^n]_\rho  F(z)\\
  										 &= p \, \chi_p(n-m)\, [z^n]_{\rho}  F(z), 
 \end{align*}
where $\chi_p(n)$ is 1 if $n$ is a multiple of $p$, 0 elsewhere.
\end{proposition}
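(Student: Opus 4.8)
The plan is to isolate the single analytic ingredient — the first equality, which is a transfer (singularity analysis) statement — and to treat everything after it as a chain of elementary identities powered by the rotation law and one roots-of-unity sum. Concretely, only
\[
[z^n]F(z) = \sum_{k=1}^{p} \exzkn F(z) + \Landauo(\rho^{-n})
\]
requires analysis (the error being exponentially smaller than the dominant $\rho^{-n}$ growth); the second through fifth lines are then purely formal.

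For this first equality I would apply the Flajolet--Odlyzko transfer machinery of \cite{flaj09,BaFl02}. Since $F$ is one of the algebraic functions $F_0,G_1$ assembled from $u_1,u_2$, its singularities of minimal modulus are exactly the finitely many $\zeta_k=w^k\rho$, and by Lemma~\ref{lem:u1u2} each is of square-root type. First I would verify that there are no further singularities on $|z|=\rho$ and that $F$ continues analytically to a joint slit disk $\bigcap_{k}\Delta_k$ of radius $\rho+\eta$, each $\Delta_k$ being a delta-domain with a thin cut at $\zeta_k$; this is the step made delicate by the presence of several dominant singularities, exactly as flagged in the text. On such a domain the difference between $F$ and the truncations of its Puiseux expansions contributes only $\LandauO((\rho+\eta)^{-n})$ to $[z^n]F$, and transferring each singular expansion termwise (with $[z^n](1-z/\zeta_k)^{\alpha}\sim \zeta_k^{-n}n^{-\alpha-1}/\Gamma(-\alpha)$) produces precisely $\sum_k\exzkn F(z)$. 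Note that for large $n$ only the non-integer powers in each expansion survive, so $\exzkn F(z)$ is genuinely the local singular contribution.

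The remaining steps are algebraic. Iterating $F(wz)=w^mF(z)$ gives the exact identity $F(z)=(w^m)^kF(w^{-k}z)$, which I substitute inside each extractor to obtain the second line. For the third line I use that a rescaling of the argument merely recenters a Puiseux expansion: writing $F(z)=\sum_j c_j(1-z/\rho)^{\alpha_j}$ near $\rho$, one has $F(w^{-k}z)=\sum_j c_j(1-z/\zeta_k)^{\alpha_j}$ near $\zeta_k$ because $w^{-k}/\rho=1/\zeta_k$; extracting termwise and using $[z^n](1-z/\zeta_k)^{\alpha}=(\rho/\zeta_k)^n[z^n](1-z/\rho)^{\alpha}=(w^{-k})^n[z^n](1-z/\rho)^{\alpha}$ yields the factor $(w^{-k})^n$ and converts $\exzkn$ into $[z^n]_\rho$. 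The fourth line factors out the $k$-independent quantity $[z^n]_\rho F(z)$ and collects $(w^m)^k(w^{-k})^n=(w^k)^{m-n}$. Finally, since $w$ is a primitive $p$-th root of unity, $\sum_{k=1}^{p}(w^{m-n})^k$ equals $p$ when $p\mid (n-m)$ and $0$ otherwise, i.e.\ $p\,\chi_p(n-m)$, completing the chain.

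I expect the main obstacle to lie entirely in the analytic step, and specifically in handling all $p$ dominant singularities at once: confirming that the $\zeta_k$ exhaust the singularities of modulus $\rho$, that the joint delta-domain is nonempty, and that the square-root local behavior from Lemma~\ref{lem:u1u2} transfers uniformly so that the $p$ error terms genuinely combine into $\Landauo(\rho^{-n})$. By contrast, once the rotation law of Lemma~\ref{lem:u1u2} is in hand, the algebraic collapse to $p\,\chi_p(n-m)$ is completely routine.
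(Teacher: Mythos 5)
Your proposal is correct and follows essentially the same route as the paper: the displayed chain of equalities \emph{is} the paper's proof, with the first line justified by the multiple-dominant-singularity transfer theorem (Theorem VI.3 of \cite{flaj09}, as the paper notes just after the proposition), the middle lines by the rotation law and the rescaling of the Puiseux expansion, and the last by the roots-of-unity filter. Your line-by-line justifications (including writing the error correctly as $\Landauo(\rho^{-n})$ rather than the paper's typographical $\Landauo(\rho^{n})$, and flagging that one must check the $\zeta_k$ exhaust the singularities on $|z|=\rho$) match the intended argument exactly.
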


We can apply this proposition to $F_0(z)$ and $G_1(z)$, because the rotation law for the $u_i$'s 
implies: $F_0(wz)=w^{-2} F_0(z)$ and $G_1(wz)=w^{-2} G_1(z)$. 
Thus, we just have to compute the asymptotics coming from the Puiseux expansion of  $F_0(z)$ and~$G_1(z)$ at $z=\rho$,
and multiply it by 7
(recall that it is classical to infer the asymptotics of the coefficients from the Puiseux expansion of the functions via the so-called ``transfer'' Theorem VI.3 from~\cite{flaj09}), this gives:

\pagebreak

\begin{theorem} The asymptotics for the number of excursions below $y=(2/5)x+2/5$ and $y=(2/5)x+1/5$  are given by:
\begin{align*}
	A_n &=[z^{7n-2}] G_1(z) = 
		 \alpha_1 \frac{\rho^{-7n}}{\sqrt{\pi (7n-2)^3}} + \ldots\\
	  & + \frac{3\alpha_2}{2}\frac{\rho^{-7n}}{\sqrt{\pi (7n-2)^5}} 
		+ \LandauO(n^{-7/2}), \\
	B_n &=[z^{7n-2}] F_0(z) = 
		 \beta_1 \frac{\rho^{-7n}}{\sqrt{\pi (7n-2)^3}} + \ldots\\
		& + \frac{3\beta_2}{2}\frac{\rho^{-7n}}{\sqrt{\pi (7n-2)^5}} 
		+ \LandauO(n^{-7/2}),
\end{align*}
with the following constant where we define the shorthand $\mu:=\tau_2/\tau$:
\begin{align*}
	\alpha_1 &= \frac{\mu^4 + 2 \mu^3 + 3 \mu^2 + 4 \mu + 5}{\sqrt{5}}, \\
	\beta_1 &= \sqrt{5}-\alpha_1, \\
	\alpha_2 &= -\frac{1}{10} \frac{5 \tau_2^7 (13\mu^4 + 22\mu^3 + 29\mu^2+36\mu+45)}{\sqrt{5}(5\tau_2^7-2)} + \ldots \\ 
	         & + \frac{2(15\mu^4+20\mu^3+13\mu^2-8\mu-45)}{\sqrt{5}(5\tau_2^7-2)},\\
	\beta_2 &=  -\frac{9}{10}\sqrt{5} - \alpha_2. \\
\end{align*}
\end{theorem}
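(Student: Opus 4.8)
The plan is to combine the closed forms of Theorem~4.1, the local expansions of Lemma~\ref{lem:u1u2}, the reduction Proposition stated just above, and the transfer theorem (Theorem~VI.3 of~\cite{flaj09}). First I would invoke that Proposition for $G_1$ and $F_0$: both satisfy a rotation law, namely $G_1(\omega z)=\omega^{-2}G_1(z)$ and $F_0(\omega z)=\omega^{-2}F_0(z)$, so here $m=-2$ and $p=7$. Since the coefficient we extract sits at index $N:=7n-2$, the character factor is $\chi_7(N-m)=\chi_7(7n)=1$; hence each of the seven dominant singularities contributes equally, and it suffices to compute the Puiseux expansion of $G_1$ and $F_0$ at the single real singularity $z=\rho=\zeta_7$ and then multiply the resulting coefficient asymptotics by $7$.

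Next I would feed the $k=7$ case of Lemma~\ref{lem:u1u2} into the closed forms. Writing $Z:=\sqrt{1-z/\rho}$, only $u_1$ is singular at $\rho$, so $u_1=\tau+\coefb\,Z+(\text{analytic})\,Z^2+\coefd\,Z^3+\dots$ while $u_2=\tau_2+D_7 Z^2+\dots$ is analytic; one also needs $1/z=\rho^{-1}(1+Z^2+Z^4+\dots)$. Substituting into $G_1=(u_1^6-u_2^6)/(z(u_1-u_2))$ and $F_0=-u_1u_2(u_1^4-u_2^4)/(z(u_1-u_2))$ and expanding in powers of $Z$, the even powers of $Z$ make up the analytic part and are invisible to the asymptotics, whereas the coefficients of $Z^1=(1-z/\rho)^{1/2}$ and $Z^3=(1-z/\rho)^{3/2}$ carry precisely the two orders we want.

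Then I would apply the transfer theorem termwise, using $[z^N](1-z/\rho)^{1/2}\sim \rho^{-N}/(\Gamma(-\tfrac12)N^{3/2})$ and $[z^N](1-z/\rho)^{3/2}\sim\rho^{-N}/(\Gamma(-\tfrac32)N^{5/2})$ with $N=7n-2$. With $\Gamma(-\tfrac12)=-2\sqrt\pi$ and $\Gamma(-\tfrac32)=\tfrac43\sqrt\pi$, the prefactor $7$, and $\rho^{-N}=\rho^2\rho^{-7n}$, one recovers exactly the shapes $\rho^{-7n}/\sqrt{\pi N^3}$ and $\tfrac32\,\rho^{-7n}/\sqrt{\pi N^5}$ displayed in the statement, the various numerical factors being absorbed into $\alpha_1,\alpha_2$. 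Substituting $\coefb=-\tau/\sqrt5$, $\coefd=-\tfrac12\coefb$ and the explicit $D_7$ of the Lemma, and factoring out powers of $\tau$ through $\mu:=\tau_2/\tau$, should collapse the rational expressions in $\tau,\tau_2$ to the stated polynomials in $\mu$, yielding $\alpha_1$ and $\alpha_2$.

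Finally, rather than repeat this bookkeeping for $F_0$, I would obtain $\beta_1,\beta_2$ from the sum: Theorem~\ref{theo:closedform} gives the exact value $A_n+B_n=\frac{2}{7n-1}\binom{7n-1}{2n}$, whose Stirling expansion produces the two constants $\sqrt5$ and $-\tfrac9{10}\sqrt5$ in the same normalization, forcing $\beta_1=\sqrt5-\alpha_1$ and $\beta_2=-\tfrac9{10}\sqrt5-\alpha_2$; expanding $F_0$ directly then serves as an independent check. The main obstacle is the second-order constant $\alpha_2$: because $G_1$ and $F_0$ are nonlinear in $u_1,u_2$, the coefficient of $(1-z/\rho)^{3/2}$ couples the square-root coefficient $\coefb$ with the next singular coefficient $\coefd$, with the analytic corrections $D_7$ and the $Z^2$ term of $u_1$, and with the Taylor tail of $1/z$. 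Tracking all these cross terms and simplifying the outcome to the compact $\mu$-polynomials is the delicate, computation-heavy part, whereas the leading constants $\alpha_1,\beta_1$ fall out with almost no effort.
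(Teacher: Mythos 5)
Your proposal matches the paper's own derivation: it applies the rotation law $F_0(\omega z)=\omega^{-2}F_0(z)$, $G_1(\omega z)=\omega^{-2}G_1(z)$ together with the local-asymptotics-extractor proposition to reduce everything to the Puiseux expansion at the single real singularity $z=\rho$ (times $7$), then feeds the $k=7$ case of Lemma~5.1 into the closed forms of Theorem~4.1 and transfers the $(1-z/\rho)^{1/2}$ and $(1-z/\rho)^{3/2}$ terms via Theorem~VI.3 of~\cite{flaj09}. The only (harmless) inversion is that you deduce $\beta_1,\beta_2$ from the exact formula for $A_n+B_n$ and use the direct expansion of $F_0$ as a check, whereas the paper expands $F_0$ directly and uses the Stirling expansion of $\frac{2}{7n-1}\binom{7n-1}{2n}$ as the sanity test; both orderings are valid since Theorem~4.2 is established independently.
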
 
This theorem leads to the following asymptotics for $A_n+B_n$  
(and this is for sure a good sanity test, coherent with a direct application of Stirling's formula to the closed form formula~\eqref{eq:AplusBex} for $A_n+B_n$):
\begin{align*}
	A_n+B_n &= \sqrt{\frac{5}{7^3\pi}} \frac{\rho^{-7n}}{\sqrt{ n^3}} + \LandauO(n^{-5/2}).
\end{align*}
Finally, we directly get
\begin{align*}
	\frac{A_n}{B_n} &= \frac{\alpha_1 + \frac{3\alpha_2}{2 (7n-2)}}{\beta_1 + \frac{3\beta_2}{2 (7n-2)}} + \LandauO(n^{-2}) \\ 
					&= \frac{\alpha_1}{\beta_1} + \frac{3}{14} \left(\frac{\alpha_2 \beta_1 - \alpha_1 \beta_2}{\beta_1^2} \right)\frac{1}{n} + \LandauO(n^{-2}),
\end{align*}
which implies that Knuth's constants are 
\begin{align*}
	\kappa_1 &= \frac{\alpha_1}{\beta_1} = - \frac{5}{\mu^4 + 2 \mu^3 + 3 \mu^2 + 4 \mu} - 1 \\
	         &\approx 1.6302576629903501404248,\\          	\kappa_2 &= - \frac{3}{14} \left(\frac{\alpha_2 \beta_1 - \alpha_1 \beta_2}{\beta_1^2} \right) \\
          	         &= \frac{3}{9800} (13-236 \kappa_1 -194 \kappa_1^2 -388 \kappa_1^3 +437 \kappa_1^4) \\
          	         &\approx         0.1586682269720227755147.
\end{align*}
Few resultant computations give that $\kappa_1$ is the unique real root of the polynomial 
$23 x^5 -41 x^4 +10 x^3 -6x^2 - x- 1$, and $(7/3) \kappa_2$ 
is the unique real root of $11571875x^5-5363750x^4+628250x^3-97580x^2+5180x-142$.
The Galois group of each of these polynomial is $S_5$, this implies that there is no closed form formula for the Knuth constants $\kappa_1$ and $\kappa_2$ in terms of basic operations on integers, and root of any degree.

\bigskip
\section{Duchon's club and other slopes}

\begin{figure*}[t]
	\centering
	\subfloat[North-East model: Dyck paths below the line of slope 2/3]{
		\includegraphics[width=0.4\textwidth]{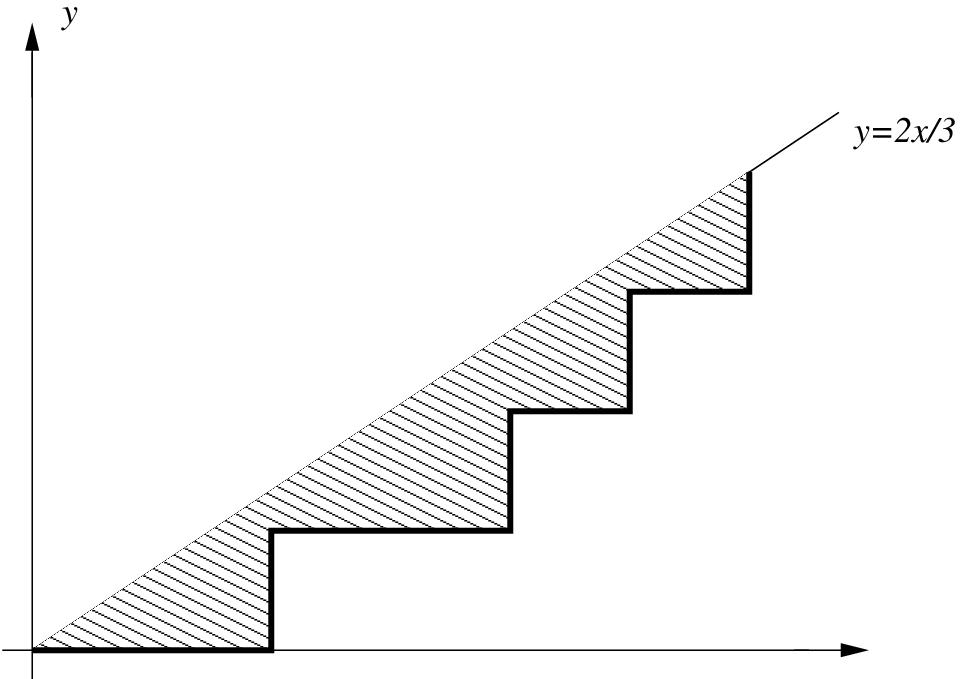}
	}
	\quad
	\subfloat[Banderier--Flajolet model: excursions with $+2$ and $-3$ jumps]{
		\includegraphics[width=0.5\textwidth]{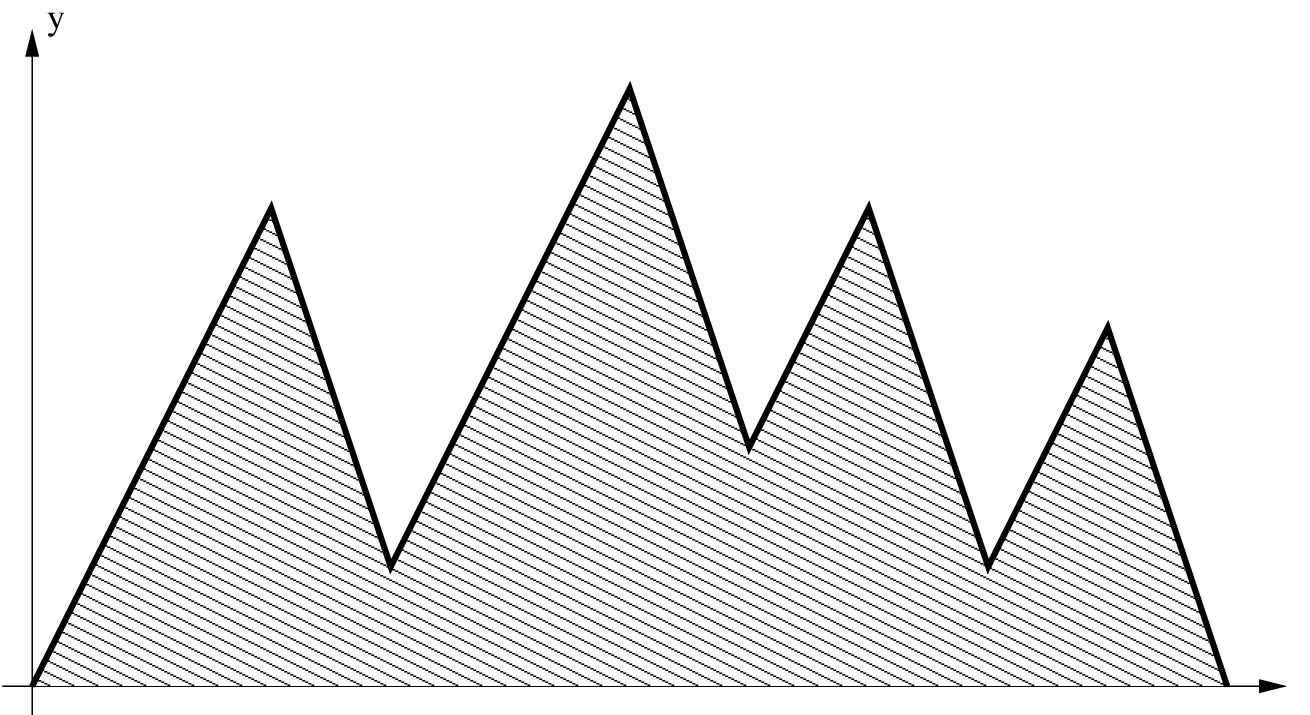}
	}
	\caption{Dyck paths below the line of slope 2/3 and Duchon's club histories (i.e., excursions with jumps $+2, -3$) are in bijection.
Duchon conjectured that the average area (in grey) was $K n^{3/2}$; 
our approach shows that $K=\sqrt{15 \pi}/2$.
 }
\end{figure*}

\smallskip
A Duchon walk is a Dyck path starting from $(0,0)$, with East and North steps,
and ending on the line $y=(2/3)x$. This model 
was analyzed in~\cite{Duchon},
and further investigated by~\cite{BaFl02}, 
who called it the ``Duchon's club'' model,
as it can be seen as the number of possible "histories" of couples entering the evening in a club\footnote{Caveat: there are no real life facts/anecdotes hidden behind this pun!}, and exiting by 3.
What is the number of possible histories (knowing the club is closing empty)?
Well, this is exactly the number $E_n$ of excursions with $n$ steps $+2, -3$,
 or (by reversal of the time) the number of excursions with $n$ steps $-2, +3$.
It was observed by Ernst Schulte-Geers,  and kindly communicated to us by Don Knuth,
that these numbers $E_n$ appeared already in an article by~\cite{Bizley54} 
(which gave some binomial formulas).
Duchon's club model should then be the Bizley--Duchon's club model;
Stigler's law of eponymy strikes again.

\smallskip
One open problem in the article~\cite{Duchon} was the following one:
``The mean area is asymptotic to $K n^{3/2}$, but the constant $K$ can
only be approximated to $3.43$''. 
Generalizing the approach of~\cite{BaGi06} (which was dealing with the "{\L}ukasiewicz walks", i.e., just one small root) 
to the more complex case of the kernel method involving several small roots,
we (together with Bernhard Gittenberger) show that $K=\sqrt{15 \pi}/2 \approx 3.432342124$,
using similar approaches that we used in the previous section (periodicity is again complicating the game).

\smallskip Last but not least, there is an equivalent of Theorem~\ref{theo:closedform} for any rational slope, and a nice expression in terms of the kernel roots
(we give more details in the full version of this article):

\pagebreak
\begin{theorem}\rm [``Additional'' closed forms]
	Let $a,c$ be integers such that $a<c$, and let $b$ be a multiple of $a$. 
	Let $A_s(k)$ be the number of Dyck walks below the line of slope $y=\frac{a}{c} x + \frac{k}{c}$, $k \geq 1$, ending at $(x_s,y_s)$ given by
	\begin{align*}
		x_s &= c s - 1, &
		y_s &= a s - 1.
	\end{align*}
	Then it holds for $s \geq 1$ and $\ell \in \N$ such that $(\ell + 1) a < c$ that
	\begin{align*}
		\sum_{k=\ell a+1}^{(\ell+1)a} A_s(k) &= \frac{\ell a+c}{(a+c)s+\ell-1} \dbinom{(a+c)s + \ell - 1}{as-1}.
	\end{align*}
\end{theorem}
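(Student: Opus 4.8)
The plan is to reduce the statement, via the bijection of Proposition~\ref{prop:bijgen} followed by a time reversal, to a coefficient extraction on the small roots $u_1,\dots,u_a$ of the kernel $1-zP(u)=0$ with $P(u)=u^{-a}+u^{c}$, and then to evaluate that coefficient by a Lagrange--B\"urmann (residue) computation, exactly in the spirit of Theorem~\ref{theo:closedform}. First I would translate $A_s(k)$ into the Banderier--Flajolet model: a Dyck walk strictly below $y=\frac ac x+\frac kc$ ending at $(cs-1,as-1)$ maps, by Proposition~\ref{prop:bijgen} and a time reversal, to a directed walk with steps $-a,+c$ that stays $\ge 0$, starts at altitude $c-a+q$ and ends at altitude $q:=k-1$, and uses $(a+c)s-2$ steps. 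Writing $E_{m\to q}(z)$ for the generating function of such walks from altitude $m$ to altitude $q$, one gets $A_s(k)=[z^{(a+c)s-2}]E_{c-a+q\to q}(z)$, so the left-hand side equals $[z^{(a+c)s-2}]\,\Sigma(z)$ with $\Sigma(z):=\sum_{q=\ell a}^{(\ell+1)a-1}E_{c-a+q\to q}(z)$. The hypotheses $a<c$ and $(\ell+1)a<c$ guarantee that all these altitudes are nonnegative and that we are in the regime with exactly $a$ small roots.

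The crux is an algebraic simplification of $\Sigma$. Solving the functional equation~\eqref{eq:funceq} by the kernel method (now with $a$ small roots instead of $2$) gives a closed form for each $E_{m\to q}$ in terms of $u_1,\dots,u_a$: with $M(z,u)$ the generating function of meanders starting at altitude $0$ and $Q_m(u)=\sum_{n=0}^{m}h_n(u_1,\dots,u_a)\,u^{m-n}$, where $h_n$ is the complete homogeneous symmetric function of the small roots, one obtains $E_{m\to q}=[u^{q}]\,Q_m(u)M(z,u)=\sum_{t=0}^{q}h_{t+c-a}\,M_t$ with $M_t:=[u^t]M(z,u)$. Summing over the full block $q=\ell a,\dots,(\ell+1)a-1$ and repeatedly applying the recurrence $h_n=e_1h_{n-1}-\dots+(-1)^{a-1}e_ah_{n-a}$ (together with the linear recurrence the $M_t$ inherit from the kernel) makes almost everything telescope, and I claim the outcome is the clean identity
\[
\Sigma(z)=\frac{1}{z^{\ell+1}}\sum_{i=1}^{a}u_i(z)^{\,c+\ell a},
\]
of which the relation $A_n+B_n=[z^{7n-1}](u_1^5+u_2^5)$ used in Theorem~\ref{theo:closedform} is precisely the case $a=2,c=5,\ell=0$. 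Conceptually, summing over one complete residue class of $a$ consecutive end-altitudes symmetrizes the $a$ small roots into a single power sum, which resonates with the rotation law of Lemma~\ref{lem:u1u2} and with the $\chi_p$ simplification used there for $F_0$ and $G_1$. I expect this telescoping to be \emph{the main obstacle}: one must prove, for all $a$ and $\ell$, that the weighted sum of the $h_{t+c-a}M_t$ collapses to the power sum $\sum_i u_i^{\,c+\ell a}$.

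Finally I would extract coefficients of this power sum. Parametrizing the small branches by $z=u^{a}/(1+u^{a+c})$ (so that $z=z(u)$ is the local $a$-to-$1$ cover whose sheets are exactly $u_1,\dots,u_a$), the argument principle gives
\[
[z^{n}]\sum_{i=1}^{a}u_i^{\,m}=\text{Res}_{u=0}\ \frac{u^{m}\,z'(u)}{z(u)^{\,n+1}}.
\]
With $m=c+\ell a$ and $n=(a+c)s+\ell-1$, expanding $z'(u)/z(u)=\frac au-\frac{(a+c)u^{a+c-1}}{1+u^{a+c}}$ and using $an-m=(a+c)(as-1)$ turns the residue into two binomial terms,
\[
[z^{n}]\sum_{i}u_i^{\,m}=a\binom{n}{as-1}-(a+c)\binom{n-1}{as-2}.
\]
Since $\binom{n-1}{as-2}=\frac{as-1}{n}\binom{n}{as-1}$, this equals $\frac{an-(a+c)(as-1)}{n}\binom{n}{as-1}=\frac{\ell a+c}{(a+c)s+\ell-1}\binom{(a+c)s+\ell-1}{as-1}$, which is the claimed formula; the integrality of the binomials is forced precisely by the endpoint $(cs-1,as-1)$, which is a good internal consistency check. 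Matching the single base case $s=1$ then completes the argument.
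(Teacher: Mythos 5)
The paper itself does not prove this theorem in the present extended abstract: it explicitly defers the proof to the full version (``we give more details in the full version of this article''), so there is no in-paper argument to compare against line by line. Judged on its own, your outer layers are correct and consistent with how the paper handles the $2/5$ case. The translation is right: a path strictly below $y=\frac ac x+\frac kc$ ending at $(cs-1,as-1)$ has $(a+c)s-2$ unit steps and, after Proposition~2.1, the shift $b\mapsto b-1$ and time reversal, becomes a nonnegative walk with jumps $-a,+c$ from altitude $c-a+k-1$ to altitude $k-1$ --- exactly the paper's derivation of $A_n=[z^{7n-2}]G_1$ and $B_n=[z^{7n-2}]F_0$. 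The residue extraction at the end is also correct: with $z(u)=u^a/(1+u^{a+c})$ and $an-m=(a+c)(as-1)$ one indeed gets $a\binom{n}{as-1}-(a+c)\binom{n-1}{as-2}=\frac{\ell a+c}{n}\binom{n}{as-1}$, and I checked this reproduces $A_1+B_1=5$ for $a=2,c=5,\ell=0$ and the value $7$ for $a=2,c=5,\ell=1,s=1$.

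The genuine gap is the step you yourself flag as ``the main obstacle'': the identity
$\sum_{q=\ell a}^{(\ell+1)a-1}E_{c-a+q\to q}(z)=z^{-(\ell+1)}\sum_{i=1}^a u_i(z)^{c+\ell a}$
is asserted, not proved. This identity \emph{is} the theorem --- everything before it is routine bookkeeping and everything after it is a one-line Lagrange inversion --- so a proof that leaves it as a ``claim'' about a telescoping that ``I expect'' to work has not proved anything yet. The statement is true (it specializes to the paper's computation $F_0+G_1=(u_1^5+u_2^5)/z$ for $a=2,c=5,\ell=0$, and it checks numerically for $a=2,c=5,\ell=1$), but the proposed route via complete homogeneous symmetric functions $h_{t+c-a}$ and the Newton-type recurrence is only a plan: you do not exhibit the cancellation, and it is not obvious that the sum over one block of $a$ consecutive end-altitudes collapses to a single power sum rather than to some other symmetric function. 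A workable way to close the gap is to solve the $a\times a$ kernel system by Cramer's rule, so that each $E_{c-a+q\to q}$ becomes a quotient of alternants in $u_1,\dots,u_a$ (a Schur-like expression generalizing the displayed forms of $F_0$ and $G_1$), and then prove the alternant identity that sums these to $\sum_i u_i^{c+\ell a}$ --- this is a concrete determinant computation, but it must actually be carried out for general $a$ and $\ell$ before the argument is complete.
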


\section{Conclusion}

We analyzed here some models of Dyck paths below a line of rational slope 2/5.
Besides the (pleasant) satisfaction of answering to a problem of Don Knuth,
this sheds light on properties of constrained lattice paths, including in the delicate case (for analysis) of a periodic  behavior.

The same approach extends (with some care) to any directed lattice paths (not only Dyck paths)
below a line of rational slope. This leads to some nice universal results,
we will present them in the full version of this article.

\smallskip
{\em En passant}, we encountered several computer algebra problems, we now list some of the corresponding questions, which should be written in details/implemented one day:
\begin{itemize}
\item The Flajolet--Salvy ACA algorithm (for analytic continuation of algebraic branches, see~\cite{flaj09}) is not yet available 
by default in most computer algebra systems, this makes them completely buggy while handling some ``RootOf'' algebraic functions: they don't follow the right branch. Human guidance remains therefore crucial for all the computer-algebra-assisted computations.
\item How to predict the minimal degree/order of the involved equations?  How to go efficiently from the differential equation to the algebraic equation, and conversely?
In the general case: the larger is the max of the numerator/denominator of the slope, 
 the more the computations with computer algebra are apocalyptic. However, there could be specific algorithms for these walk problems, as the corresponding functions have a lot of structure (often dictated by the kernel method).
 \item Is the platypus algorithm (see~\cite{BaFl02}) the fastest way to get the recurrence?
\item Is there a way to play (e.g.\ with Pad\'e approximations) in order to get universal behavior of the asymptotics, even in the case of an irrational slope? 
The nature of the constants then appearing in the asymptotics (and how to handle them efficiently) is open.
\end{itemize}

{\bf Acknowledgements:}
\label{sec:ack}
This work is the result of a collaboration founded by the SFB project F50 ``Algorithmic and Enumerative Combinatorics'' 
and the Franco-Austrian PHC ``Amadeus''. Michael Wallner is supported by the Austrian Science Fund (FWF) grant SFB F50-03 and by \"OAD, grant F04/2012. 
We also thank the two referees for their feedback, and not last but least,
Don Knuth and Manuel Kauers for exchanging on this problem.

\,\quad 

\addcontentsline{toc}{chapter}{References}
\bibliographystyle{plain}
\bibliography{knuthslope}
\label{sec:biblio}
\end{document}